\documentclass[english]{article}
\usepackage[T1]{fontenc}
\usepackage[latin9]{inputenc}
\usepackage{amsmath}
\usepackage{amsthm}
\usepackage{amssymb}
\usepackage{wasysym}

\makeatletter

\newcommand{\lyxmathsym}[1]{\ifmmode\begingroup\def\b@ld{bold}
  \text{\ifx\math@version\b@ld\bfseries\fi#1}\endgroup\else#1\fi}

\newcommand{\lyxaddress}[1]{
	\par {\raggedright #1
	\vspace{1.4em}
	\noindent\par}
}
\newenvironment{lyxlist}[1]
	{\begin{list}{}
		{\settowidth{\labelwidth}{#1}
		 \setlength{\leftmargin}{\labelwidth}
		 \addtolength{\leftmargin}{\labelsep}
		 }}
	{\end{list}}
\theoremstyle{plain}
\newtheorem{thm}{\protect\theoremname}
\theoremstyle{definition}
\newtheorem{defn}[thm]{\protect\definitionname}

\@ifundefined{date}{}{\date{}}
\usepackage[all]{xy}

\usepackage{placeins}
\usepackage{lineno}
\usepackage{flushend}
\usepackage{graphicx}
\usepackage{changepage}
\usepackage{cite}
\usepackage[titletoc,title]{appendix}
\usepackage{titlesec}

\titleformat{\subsection}[runin]
  {\normalfont\bfseries}{\thesubsection}{1pt}{.}
\titleformat{\subsubsection}[runin]
  {\normalfont\bfseries}{\thesubsubsection}{1em}{}

\makeatother

\usepackage{babel}
\providecommand{\definitionname}{Definition}
\providecommand{\theoremname}{Theorem}

\begin{document}
\title{The Contextuality-by-Default View of the Sheaf-Theoretic Approach
to Contextuality}
\author{Ehtibar N. Dzhafarov}
\maketitle

\lyxaddress{\begin{center}
\textsuperscript{}Purdue University, ehtibar@purdue.edu\\
 
\par\end{center}}
\begin{abstract}
The Sheaf-Theoretic Contextuality (STC) theory developed by Abramsky
and colleagues is a very general account of whether multiply overlapping
subsets of a set, each of which is endowed with certain ``local''
structure, can be viewed as inheriting this structure from a global
structure imposed on the entire set. A fundamental requirement of
STC is that any intersection of subsets inherit one and the same structure
from all intersecting subsets. I show that when STC is applied to
systems of random variables, it can be recast in the language of the
Contextuality-by-Default (CbD) theory, and this allows one to extend
STC to arbitrary systems, in which the requirement in question (called
``consistent connectedness'' in CbD) is not necessarily satisfied.
When applied to possibilistic systems, such as systems of logical
statements with unknown truth values, the problem arises of distinguishing
lack of consistent connectedness from contextuality. I show that it
can be resolved by considering systems with multiple possible deterministic
realizations as quasi-probabilistic systems with epistemic (or Bayesian)
probabilities assigned to the realizations. Although STC and CbD have
distinct native languages and distinct aims and means, the conceptual
modifications presented in this paper seem to make them essentially
coextensive.\medskip{}

\textsc{Keywords}: contextual fraction, contextuality, consistent
connectedness, dichotomization, inconsistent connectedness, measures
of contextuality.
\end{abstract}

\section{Introduction}

\subsection{}

Contextuality-by-Default (CbD) and Sheaf-Theoretic Contextuality (STC)
are two general approaches to establishing and measuring (non)contextuality
of systems of measurements. The word ``measurement'' is understood
very broadly, including various relations between inputs and outputs
in a physical entity, between databases and records, between logical
statements and their truth values, etc. The two theories employ distinct
mathematical languages, and were designed with different aims in mind.
With respect to their applicability areas, when dealing with probabilisitic
scenarios, STC is confined to special systems, called ``strongly
consistently connected'' in CbD. By contrast, CbD was designed with
the primary purpose to apply to arbitrary probabilistic systems. On
the other hand, CbD is confined to probabilistic systems, with all
deterministic systems being trivially noncontextual. By contrast,
STC offers contextuality analysis of systems that are inherently deterministic,
finding there interesting cases of contextual systems. These comparative
characterizations are incomplete, and I will elaborate them as we
proceed. 

\subsection{}

The two theories are represented by numerous publications, and their
representative exposition can be found in Refs. \cite{AbramBarbMans2017,AbramskyBrand2011,Abramsky2014,Abramskyetal2017,AbtramskyBrandSavochking2014}
for STC and Refs. \cite{DzhCerKuj2017,KUjDzh2019Measures,KujDzhProof2016,DzhKujCer_Polytope,DzhKuj2017Fortsch}
for CbD. Familiarity with the two theories would be helpful in reading
this paper, but all relevant results and concepts, including those
mentioned in this introduction, will be explained below. The reader
is to be warned, however, that I will not endeavor here what might
be a worthy project for future work: to systematically present the
two theories using their native languages and establish correspondences
and differences between them. Rather, as the title of the paper suggests,
STC will only be presented here from the point of view of CbD, in
the CbD language. 

\subsection{}

I will show that by doing this one can easily extend STC, including
the important notion of (non)contextual fraction, to apply to inconsistently
connected systems. More precisely, the CbD language allows one to
redefine inconsistently connected systems into consistently connected
ones, so that STC can apply to them.

\subsection{}

I will argue that there is a conceptual problem in applying original
STC, with its commitment to strong consistent connectedness, to inherently
deterministic systems, such as systems of statements with definitive
truth values. A way of dealing with this issue I propose is to distinguish
completely specified deterministic systems (that are always noncontextual)
and systems with multiple possible deterministic realizations. In
the latter case, by assigning Bayesian priors to these realizations
one renders such systems quasi-probabilistic, disentangling thereby
contextuality and inconsistent connectedness. This construction allows
one to extend CbD to the contextuality analysis of deterministic systems
in the spirit of STC. 

\section{CbD: Conceptual set-up}

\subsection{}

In CbD, the object of contextuality analysis is a \emph{system of
random variables} representing what generically can be called measurements.
Depending on application, the random variables may describe outputs
of physical measurements, responses to inputs, answers to questions,
etc. The random variables in a system are assumed to be \emph{dichotomous}
(say, +1/-1), for reasons discussed in Sections \ref{subsec:firstdreason}
and \ref{subsec:secondreason}. This means that any measurement is
presented as simultaneous answers to a set of Yes/No questions, such
as ``Is the measured value less than 5?''. 

\subsection{}

The question a random variable answers is referred to as the \emph{content}
of the random variable. A set of random variables forms a system if
they are labeled both by their contents and by their \emph{contexts}.
A context of a variable includes all conditions recorded \emph{together}
with this variable, where ``together'' can mean any empirical procedure
by which the observed values of the random variables and conditions
within the context are paired. For instance, if a random variable
$R$ is recorded together with two other random variables, this fact
is part of the context of $R$. If the order of recording these random
variables is itself systematically recorded, this is part of the context
of $R$ too.

\subsection{}

The terminology is: if content $q$ is measured in context $c$, which
is written 
\[
q\prec c,
\]
the outcome of the measurement is the (dichotomous) random variable
$R_{q}^{c}$. 

\subsection{}

If the sets of the contents and contexts are finite (\emph{as we are
going to assume throughout this paper}), the system can be represented
by a matrix like this:

\begin{equation}
\begin{array}{|c|c|c|c||c|}
\hline \begin{array}{c}
\\
\\
\end{array}R_{1}^{1} & R_{2}^{1} &  &  & c^{1}\\
\hline \begin{array}{c}
\\
\\
\end{array} & R_{2}^{2} & R_{3}^{2} & R_{4}^{2} & c^{2}\\
\hline \begin{array}{c}
\\
\\
\end{array}R_{1}^{3} &  & R_{3}^{3} &  & c^{3}\\
\hline \begin{array}{c}
\\
\\
\end{array}R_{1}^{4} &  &  & R_{4}^{4} & c^{4}\\
\hline \begin{array}{c}
\\
\\
\end{array}R_{1}^{5} & R_{2}^{5} & R_{3}^{5} &  & c^{5}\\
\hline\hline \begin{array}{c}
\\
\\
\end{array}q_{1} & q_{2} & q_{3} & q_{4} & \mathcal{R}
\\\hline \end{array}\:.\label{eq:exmaple1}
\end{equation}
This system has four contents variously measured in five contexts,
and $R_{i}^{j}$ is the abbreviation for $R_{q=q_{i}}^{c=c^{j}}$.
I will use this system as an example throughout the paper. 

\subsection{}

The following are the two basic properties of any system. 
\begin{lyxlist}{00.00.0000}
\item [{CbD1:}] All random variables sharing a context are jointly distributed
(i.e., they can be presented as measurable functions on one and the
same probability space).
\item [{CbD2:}] Any two random variables in different contexts are stochastically
unrelated, i.e., they are defined on distinct probability spaces.
\end{lyxlist}
\subsection{}

If any two random variables in the system that have the same content
are identically distributed, writing this as 
\[
R_{q}^{c}\sim R_{q}^{c'},
\]
the system is called (\emph{simply}) \emph{consistently connected}.
CbD does not assume this property: generally, a system can be\emph{
inconsistently connected }(and this term is often used in the meaning
of ``not necessarily consistently connected'').

\subsection{}

The following is the main definition in CbD. 
\begin{defn}
A system is \emph{noncontextual} if it has a \emph{multimaximally
connected coupling}. Otherwise it is \emph{contextual}. 
\end{defn}

The terminology in the definition is deciphered in the next three
sections (\ref{subsec:expl1}-\ref{subsec:expl3}) 

\subsection{}\label{subsec:expl1}

Let 
\begin{equation}
\mathcal{R}=\left\{ R_{q}^{c}:c\in C,q\in Q,q\prec c\right\} \label{eq:systemR}
\end{equation}
be a system (with $Q$ and $C$ being sets of contents and contexts,
respectively). A \emph{coupling} of this system is a correspondingly
labeled set of random variables
\begin{equation}
\mathcal{S}=\left\{ S_{q}^{c}:c\in C,q\in Q,q\prec c\right\} ,\label{eq:couplingS}
\end{equation}
such that 
\begin{lyxlist}{00.00.0000}
\item [{(a)}] its components are jointly distributed; 
\item [{(b)}] its context-wise marginals are distributed in the same way
as the corresponding subsets of the original system. 
\end{lyxlist}
That is, for any $c\in C$,
\begin{equation}
\left\{ R_{q}^{c}:q\in Q,q\prec c\right\} \sim\left\{ S_{q}^{c}:q\in Q,q\prec c\right\} .
\end{equation}

\subsection{}\label{subsec:expl2}

For instance, the jointly distributed random variables
\begin{equation}
\begin{array}{|c|c|c|c||c|}
\hline \begin{array}{c}
\\
\\
\end{array}S_{1}^{1} & S_{2}^{1} &  &  & c^{1}\\
\hline \begin{array}{c}
\\
\\
\end{array} & S_{2}^{2} & S_{3}^{2} & S_{4}^{2} & c^{2}\\
\hline \begin{array}{c}
\\
\\
\end{array}S_{1}^{3} &  & S_{3}^{3} &  & c^{3}\\
\hline \begin{array}{c}
\\
\\
\end{array}S_{1}^{4} &  &  & S_{4}^{4} & c^{4}\\
\hline \begin{array}{c}
\\
\\
\end{array}S_{1}^{5} & S_{2}^{5} & S_{3}^{5} &  & c^{5}\\
\hline\hline \begin{array}{c}
\\
\\
\end{array}q_{1} & q_{2} & q_{3} & q_{4} & \mathcal{S}
\\\hline \end{array}\label{eq:coupling example}
\end{equation}
form a coupling of system $\mathcal{R}$ in (\ref{eq:exmaple1}) if
the corresponding row-wise distributions in $\mathcal{S}$ and $\mathcal{R}$
coincide. 

\subsection{}\label{subsec:expl3}

Now, the coupling $\mathcal{S}$ is \emph{multimaximally connected}
if, for any two content-sharing random variables $S_{q}^{c},S_{q}^{c'}$
it contains, the probability of $S_{q}^{c}=S_{q}^{c'}$ is maximal
among all possible couplings of $\mathcal{R}$. Equivalently, the
probability of $S_{q}^{c}=S_{q}^{c'}$ in a multimaximal coupling
is maximal given the marginal distributions of $S_{q}^{c}\sim R_{q}^{c}$
and $S_{q}^{c'}\sim R_{q}^{c'}$. For system (\ref{eq:exmaple1})
with coupling (\ref{eq:coupling example}), this means the simultaneous
maximization of the probabilities of
\[
S_{1}^{1}=S_{1}^{3},S_{1}^{1}=S_{1}^{4},S_{1}^{3}=S_{1}^{4},S_{3}^{2}=S_{3}^{3},\ldots
\]

\subsection{}

Two couplings of the same system that have the same distribution are
considered equivalent and are not distinguished. In other words, the
domain probability space of a coupling with a given distribution can
be chosen arbitrarily. The most economic choice of the domain space
is the distribution itself, with all random variables being defined
as the componentwise projections of the identity function on this
space. That is, (\ref{eq:couplingS}) can be viewed as the identity
function on the probability space
\begin{equation}
\left(X_{S},\Sigma_{S},\mu_{S}\right)=\left\{ \left\{ -1,1\right\} ^{\left|\prec\right|},2^{\left\{ -1,1\right\} ^{\left|\prec\right|}},\mu_{S}:2^{\left\{ -1,1\right\} ^{\left|\prec\right|}}\rightarrow\left[0,1\right]\right\} ,
\end{equation}
where $\left|\prec\right|$ is the cardinality of the relation $\prec$,
and $\mu_{S}$ is the probability measure defined by the (joint) probability
mass function
\[
\Pr\left[S_{q}^{c}=s_{q}^{c}:c\in C,q\in Q,q\prec c\right],
\]
with $s_{q}^{c}=-1,1$. For uncountably infinite $Q$ and/or $C$
the definition should be modified in well-known ways, but we have
agreed to consider finite $Q,C$ only. 

\subsection{}

Note that any jointly distributed set of random variables is a random
variable. So the set of all variables within context $c$ can be written
as 
\[
\mathcal{R}^{c}=\left\{ R_{q}^{c}:q\in Q,q\prec c\right\} =R^{c},
\]
and a coupling $\mathcal{S}$ in (\ref{eq:couplingS}) can be written
as a random variable $S$.

\section{STC in the CbD Language}

\subsection{}

The language of STC is very different, and it manages to avoid even
mentioning random variables. Thus, our example system (\ref{eq:exmaple1})
could be represented in STC as{\small{}
\begin{equation}
\begin{array}{|c|c|c|c||c|}
\hline p_{\bar{1}\bar{1}}^{1}\begin{array}{c}
\\
\\
\end{array} & p_{\bar{1}1}^{1} & p_{1\bar{1}}^{1} & p_{11}^{1} & \left(q_{1},q_{2}\right)\\
\hline \begin{array}{c|c}
\begin{array}{c}
\\
\\
\end{array}p_{\bar{1}\bar{1}\bar{1}}^{2} & p_{\bar{1}\bar{1}1}^{2}\end{array} & \begin{array}{c|c}
\begin{array}{c}
\\
\\
\end{array}p_{\bar{1}1\bar{1}}^{2} & p_{\bar{1}11}^{2}\end{array} & \begin{array}{c|c}
\begin{array}{c}
\\
\\
\end{array}p_{1\bar{1}\bar{1}}^{2} & p_{1\bar{1}1}^{2}\end{array} & \begin{array}{c|c}
\begin{array}{c}
\\
\\
\end{array}p_{11\bar{1}}^{2} & p_{111}^{2}\end{array} & \left(q_{2},q_{3},q_{4}\right)\\
\hline \begin{array}{c}
\\
\\
\end{array}p_{\bar{1}\bar{1}}^{3} & p_{\bar{1}1}^{3} & p_{1\bar{1}}^{3} & p_{11}^{3} & \left(q_{1},q_{3}\right)\\
\hline \begin{array}{c}
\\
\\
\end{array}p_{\bar{1}\bar{1}}^{4} & p_{\bar{1}1}^{4} & p_{1\bar{1}}^{4} & p_{11}^{4} & \left(q_{1},q_{4}\right)\\
\hline \begin{array}{c|c}
\begin{array}{c}
\\
\\
\end{array}p_{\bar{1}\bar{1}\bar{1}}^{5} & p_{\bar{1}\bar{1}1}^{5}\end{array} & \begin{array}{c|c}
\begin{array}{c}
\\
\\
\end{array}p_{\bar{1}1\bar{1}}^{5} & p_{\bar{1}11}^{5}\end{array} & \begin{array}{c|c}
\begin{array}{c}
\\
\\
\end{array}p_{1\bar{1}\bar{1}}^{5} & p_{1\bar{1}1}^{5}\end{array} & \begin{array}{c|c}
\begin{array}{c}
\\
\\
\end{array}p_{11\bar{1}}^{5} & p_{111}^{5}\end{array} & \left(q_{1},q_{2},q_{3}\right)
\\\hline \end{array}\:,
\end{equation}
}where $p^{i}$ is the probability distribution in the $i$th context,
and the subscripts represent combinations of values $1$ and $-1\equiv\bar{1}$.
The values $\left(-1,-1\right)$ in $p_{\bar{1}\bar{1}}^{1}$ are
not values of the contents $\left(q_{1},q_{2}\right)$ (called in
STC ``measurements'', ``observables'', or simply ``variables'').
They are values of the random variables not being mentioned. 

\subsection{}

The reason this does not lead to complications is that STC is committed
to requiring that a system of random variable amenable to contextuality
analysis be \emph{strongly consistently connected}. This means that
for any pair of contexts $c,c'$, we have
\begin{equation}
\left\{ R_{q}^{c}:q\prec c,q\prec c'\right\} \sim\left\{ R_{q}^{c'}:q\prec c,q\prec c'\right\} ,
\end{equation}
i.e., the joint distributions for identically subscripted random variables
are identical. Thus, in our example (\ref{eq:exmaple1}), 
\[
\left\{ R_{1}^{5},R_{2}^{5}\right\} \sim\left\{ R_{1}^{1},R_{2}^{1}\right\} 
\]
and 
\[
\left\{ R_{2}^{5},R_{3}^{5}\right\} \sim\left\{ R_{2}^{2},R_{3}^{2}\right\} .
\]
 Abramsky and colleagues consider this property fundamental (see,
e.g., Ref. \cite{AbtramskyBrandSavochking2014}), and it is indeed
indispensable if one is to use the language of sheafs. 

\subsection{}

It should be noted, however, that in quantum-physical experiments
even simple consistent connectedness (which is obviously implied by
strong one) is routinely violated (see, e.g., Ref. \cite{KujDzhLar2015},
and for more references, Ref. \cite{Dzh2019}). In some non-physical
applications, notably in human behavior, inconsistent connectedness
is a universal rule \cite{CervDzhSQ,DzhZhaKuj2016}.

\subsection{}

Note that random variables may, in particular, be deterministic, i.e.
they may attain a single value with probability 1. If the requirement
of consistent connectedness is applied to such variables, then it
translates into any two content-sharing variables being equal to one
and the same value,
\begin{equation}
R_{q}^{c}\equiv r\Longleftrightarrow R_{q'}^{c}\equiv r.
\end{equation}
 We will see later, in Section \ref{sec:Deterministic-Systems}, that
this constraint creates a difficulty when STC deals with deterministic
systems.

\subsection{}

The property of strong consistent connectedness allows Abramsky and
colleagues to define a context simply by the set of contents measured
together. Thus, in the example (\ref{eq:exmaple1}), $c^{1}$ would
be defined as the context in which we measure $\left\{ q_{1},q_{2}\right\} $,
$c^{2}$ as the context in which we measure $\left\{ q_{2},q_{3},q_{4}\right\} $,
etc. In STC, there cannot be distinct contexts with the same set of
random variables in them, because their joint distributions would
have to be the same.

\subsection{}\label{subsec:exmapleC2}

To illustrate the effect of the restriction imposed by STC on the
CbD framework, consider the system
\begin{equation}
\begin{array}{|c|c||c|}
\hline \begin{array}{c}
\\
\\
\end{array}R_{1}^{1} & R_{2}^{1} & c^{1}\\
\hline \begin{array}{c}
\\
\\
\end{array}R_{1}^{2} & R_{2}^{2} & c^{2}\\
\hline\hline \begin{array}{c}
\\
\\
\end{array}q_{1} & q_{2} & \mathcal{C}_{2}
\\\hline \end{array}\:.\label{eq:cyclic2}
\end{equation}
In STC, it can only represent the same context repeated twice, which
makes the system trivially noncontextual. By contrast, in CbD, this
so-called cyclic system of rank 2 is the smallest nontrivial system,
one that can be contextual or noncontextual depending on the distributions
involved \cite{KujDzhProof2016}. In particular, the difference between
the two contexts may be the order in which the two contents are measured
($q_{1}\rightarrow q_{2}$ and $q_{2}\rightarrow q_{1}$) \cite{DzhZhaKuj2016}. 

\subsection{}

The assumption of consistent connectedness (not necessarily strong
one) simplifies the definition of (non)contextuality. 
\begin{defn}[Equivalent of STC definition]
A consistently connected system is noncontextual if it has an\emph{
identically connected coupling}. Otherwise it is \emph{contextual}. 
\end{defn}

A coupling $\mathcal{S}$ is identically connected if, for any $q,c,c'$
such that $q\prec c,c'$, 
\begin{equation}
\Pr\left[S_{q}^{c}=S_{q}^{c'}\right]=1.
\end{equation}

\subsection{}

This is, clearly, a special case of a multimaximal coupling: the maximal
probability of $S_{q}^{c}=S_{q}^{c'}$ in such a coupling is 1 if
and only if $R_{q}^{c}\sim R_{q}^{c'}$. For all practical purposes,
it allows one to think of the random variables $R_{q}^{c}$ as ``context-independent,''
and many authors would even denote them as $R_{q}$. This is, however,
this is a dubious practice that leads to a logical contradiction \cite{DzhKuj2017Fortsch,Dzh2019}.
STC avoids this difficulty by not mentioning random variables at all,
and systematically labeling probability distributions by their contexts
\cite{Dzh2019}. 

\subsection{}

Definition 2 does not explicitly require that consistent connectedness
of the system be strong. This, however, makes little difference if
one is only interested in determining whether a system is contextual.
It is easy to show the following.
\begin{thm}
A simply consistently connected system that is not strongly consistently
connected is contextual.
\end{thm}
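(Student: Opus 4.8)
The plan is to prove the contrapositive of the relevant half of Definition 2: I will show that any system admitting an identically connected coupling is automatically strongly consistently connected. Since the system in the statement is assumed \emph{not} to be strongly consistently connected, it can then possess no identically connected coupling, and hence is contextual by Definition 2.

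First I would suppose, toward a contradiction, that the simply consistently connected system $\mathcal{R}$ has an identically connected coupling $\mathcal{S}$. By the defining property of identical connectedness, for every content $q$ and every pair of contexts $c,c'$ with $q\prec c,c'$ we have $\Pr\left[S_{q}^{c}=S_{q}^{c'}\right]=1$. Thus, within the jointly distributed set $\mathcal{S}$, all coupling variables sharing a content collapse almost surely to one and the same random variable, which I would denote $S_{q}$.

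Next I would select two contexts $c,c'$ witnessing the failure of strong consistent connectedness, and set $Q_{cc'}=\left\{ q:q\prec c,\;q\prec c'\right\}$ for the contents they share. By assumption the joint distributions of $\left\{ R_{q}^{c}:q\in Q_{cc'}\right\}$ and $\left\{ R_{q}^{c'}:q\in Q_{cc'}\right\}$ differ. On the other hand, the coupling property forces the context-wise marginals of $\mathcal{S}$ to reproduce those of $\mathcal{R}$, and restricting further to the shared contents gives $\left\{ R_{q}^{c}:q\in Q_{cc'}\right\}\sim\left\{ S_{q}^{c}:q\in Q_{cc'}\right\}$ and likewise for $c'$. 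But the almost-sure identities $S_{q}^{c}=S_{q}^{c'}=S_{q}$ make $\left\{ S_{q}^{c}:q\in Q_{cc'}\right\}$ and $\left\{ S_{q}^{c'}:q\in Q_{cc'}\right\}$ literally the same random vector, hence identically distributed. Chaining these relations yields $\left\{ R_{q}^{c}:q\in Q_{cc'}\right\}\sim\left\{ R_{q}^{c'}:q\in Q_{cc'}\right\}$, contradicting the choice of $c,c'$.

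I expect the only delicate point to be the bookkeeping of marginals: one must verify that restricting a coupling first to a context and then to the shared subset $Q_{cc'}$ yields exactly the corresponding marginal of $\mathcal{R}$, and that almost-sure equality of the coordinate variables transfers to equality of the joint laws over $Q_{cc'}$. Both become immediate once $\mathcal{S}$ is represented on its canonical probability space, so the argument is essentially routine. It is worth noting that simple consistent connectedness enters only as the ambient hypothesis under which Definition 2 applies (it guarantees $R_{q}^{c}\sim R_{q}^{c'}$, so that an identically connected coupling is even a candidate); the substance of the theorem is precisely that agreement of the single-variable marginals does \emph{not} propagate to agreement of the joint marginals over shared contents.
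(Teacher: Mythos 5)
Your proposal is correct and follows essentially the same route as the paper: both arguments show that an identically connected coupling forces the almost-sure equality of whole tuples of content-sharing variables, hence equality of the joint distributions over shared contents, so a failure of strong consistent connectedness rules out such a coupling and Definition 2 gives contextuality. Your version merely spells out the marginal bookkeeping that the paper's two-line proof leaves implicit.
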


\begin{proof}
In an identically connected coupling of the system, $\Pr\left[S_{q}^{c}=S_{q}^{c'}\right]=1$
and $\Pr\left[S_{q'}^{c}=S_{q'}^{c'}\right]=1$ imply 
\[
\Pr\left[\left(S_{q}^{c},S_{q'}^{c}\right)=\left(S_{q}^{c'},S_{q'}^{c'}\right)\right]=1,
\]
which is only possible if $\left(R_{q}^{c},R_{q'}^{c}\right)\sim\left(R_{q}^{c'},R_{q'}^{c'}\right)$.
\end{proof}
One can think of the strong consistent connectedness requirement in
STC as a provision excluding this ``guaranteed'' contextuality from
consideration. 

\section{Contextual Fraction}

\subsection{}

There are several possible ways of measuring the degree of contextuality
in CbD \cite{KUjDzh2019Measures}, but in STC the measure of choice
is contextual fraction. I present it, as everything else in this paper,
in the language of CbD. We need a few general probabilistic notions
first.

\subsection{}

An \emph{incomplete probability space}, or \emph{$\alpha$-probability
space} (where $0\leq\alpha\leq1$) is defined as a measure space $\left(X,\Sigma,\mu\right)$
with $\mu\left(X\right)=\alpha$. The meaning of the components of
the space (set $X$, sigma-algebra $\Sigma,$ and sigma-additive measure
$\mu$) is standard. Any measurable function $Z$ defined on this
space is called an \emph{incomplete (random) variable}, or a \emph{(random)
$\alpha$-variable}. It is essentially an ordinary random variable:
for any measurable set $D$ in the codomain of $Z$, $\Pr\left[Z\in D\right]$
is defined as $\mu\left(Z^{-1}\left(D\right)\right)$ and referred
to as the probability of $Z$ falling in $D$. The only difference
is that if $D=Z\left(X\right)$ then $\Pr\left[Z\in D\right]=\alpha$.
The rest of the concepts related to \emph{$\alpha$-}variables (e.g.,
their joint distribution) are the same as for true random variables
(with $\alpha=1$). In Feller's classical monograph \cite{Feller1968}
incomplete random variables are called ``defective''. Other terms,
such as ``improper'', are used too.

\subsection{}

Let $0\leq\alpha\leq\beta\leq1$, and let $Z_{\alpha}$ and $Z_{\beta}$
be an \emph{$\alpha$-}variable and a \emph{$\beta$-}variable, respectively,
with the same codomain. We say that $Z_{\alpha}$ is \emph{majorized}
by $Z_{\beta}$, if for every measurable set $D$ in their common
codomain, 
\begin{equation}
\Pr\left[Z_{\alpha}\in D\right]\leq\Pr\left[Z_{\beta}\in D\right].
\end{equation}
We write then 
\begin{equation}
Z_{\alpha}\apprle Z_{\beta}.
\end{equation}

\subsection{}

An \emph{incomplete (or $\alpha$-) coupling} of a system of random
variables $\mathcal{R}$ is a correspondingly indexed set 
\begin{equation}
^{\alpha}\mathcal{S}=\left\{ ^{\alpha}S_{q}^{c}:c\in C,q\in Q,q\prec c\right\} 
\end{equation}
 of jointly distributed $\alpha$-variables such that, for any context
$c\in C$, 
\begin{equation}
\left\{ ^{\alpha}S_{q}^{c}:q\in Q,q\prec c\right\} \apprle\left\{ R_{q}^{c}:q\in Q,q\prec c\right\} .
\end{equation}
An \emph{$\alpha$-}coupling is identically connected if 
\begin{equation}
\Pr\left[^{\alpha}S_{q}^{c}\not={}^{\alpha}S_{q}^{c'}\right]=0,
\end{equation}
for any $q\prec c,c'$. Clearly, an identically connected \emph{$\alpha$-}coupling
may exist only for a consistently connected system, and the latter
is noncontextual if and only if it has an identically connected \emph{$1$-}coupling. 

\subsection{}

The following theorem allows one to introduce a measure of contextuality.
\begin{thm}
\label{thm:pip}Any consistently connected system has an identically
connected \emph{$\alpha_{\max}$-}coupling ($0\leq\alpha_{\max}\leq1$),
such that the system has no identically connected \emph{$\alpha$-}couplings
with \textup{$\alpha>\alpha_{\max}$.}
\end{thm}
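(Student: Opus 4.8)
The plan is to observe that, for a system with finite $Q$ and $C$ and dichotomous variables, the existence of an identically connected ${}^{\alpha}$-coupling is a finite-dimensional feasibility problem, so that $\alpha_{\max}$ can be realized as the optimal value of a linear program. First I would use identical connectedness to compress the description of any such coupling. The condition $\Pr\left[{}^{\alpha}S_{q}^{c}\neq{}^{\alpha}S_{q}^{c'}\right]=0$ forces, for each content $q$, all of its context copies ${}^{\alpha}S_{q}^{c}$ to coincide almost everywhere; hence there is a single random value $S_{q}$ attached to $q$, and the joint $\alpha$-variable ${}^{\alpha}\mathcal{S}$ is determined by a nonnegative measure $w$ on the finite set of global sections $g\in\left\{-1,1\right\}^{Q}$ (assignments $q\mapsto g(q)$), with $w(g)\ge 0$ and $\sum_{g}w(g)=\alpha$.

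Next I would rewrite the majorization requirement as linear inequalities in the unknowns $w(g)$. Fix a context $c$ and a pattern $\mathbf{s}=(s_{q})_{q\prec c}$ of values in $\left\{-1,1\right\}$. The context-$c$ marginal mass of ${}^{\alpha}\mathcal{S}$ at $\mathbf{s}$ is the sum of $w(g)$ over those global sections $g$ with $g(q)=s_{q}$ for all $q\prec c$, so that, since majorization of discrete sub-distributions is equivalent to its pointwise version, the requirement ${}^{\alpha}\mathcal{S}^{c}\apprle R^{c}$ becomes
\[
\sum_{g:\, g(q)=s_{q}\ \forall q\prec c} w(g)\ \le\ p^{c}_{\mathbf{s}},\qquad w(g)\ge 0,
\]
imposed for every $c$, every $\mathbf{s}$, and every $g$, where $p^{c}_{\mathbf{s}}=\Pr\left[R^{c}=\mathbf{s}\right]$. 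Summing the first inequalities over all patterns $\mathbf{s}$ within a single context gives $\alpha=\sum_{g}w(g)\le\sum_{\mathbf{s}}p^{c}_{\mathbf{s}}=1$, so every feasible $\alpha$ lies in $\left[0,1\right]$; and $w\equiv 0$ is always feasible, yielding $\alpha=0$. The feasible set $\mathcal{F}\subseteq\mathbb{R}^{2^{\left|Q\right|}}$ cut out by these non-strict inequalities is therefore nonempty, closed, and bounded, hence a compact convex polytope.

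Finally I would maximize the total mass. The objective $w\mapsto\alpha(w)=\sum_{g}w(g)$ is linear, hence continuous, so by the extreme-value theorem it attains a maximum $\alpha_{\max}\in\left[0,1\right]$ on the compact set $\mathcal{F}$, say at $w^{\star}$. Interpreting $w^{\star}$ as a joint $\alpha_{\max}$-variable through the correspondence of the first paragraph produces an identically connected ${}^{\alpha_{\max}}$-coupling. Conversely, any identically connected ${}^{\alpha}$-coupling yields, by the same translation, a point $w\in\mathcal{F}$ with $\alpha(w)=\alpha$, and maximality then forces $\alpha\le\alpha_{\max}$; thus no identically connected ${}^{\alpha}$-coupling exists for $\alpha>\alpha_{\max}$. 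I expect the one step requiring genuine care to be the reduction in the first paragraph, namely verifying that almost-sure agreement of all content-sharing variables really does confine the support of ${}^{\alpha}\mathcal{S}$ to the global sections $\left\{-1,1\right\}^{Q}$, so that the full joint law on $\left\{-1,1\right\}^{\left|\prec\right|}$ collapses to the measure $w$. Once this is in place, compactness together with the extreme-value theorem makes the existence of the maximizer routine; notably, the argument relies only on the finiteness of $Q,C$ and the resulting finite dimensionality, not on consistent connectedness as such.
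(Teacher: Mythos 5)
Your proposal is correct and follows essentially the same route as the paper: both reduce an identically connected $\alpha$-coupling to a nonnegative weight vector over global sections in $\left\{-1,1\right\}^{Q}$, encode the majorization constraints as the linear system $\mathbf{B}\mathbf{s}\leq\mathbf{r}$ with $\mathbf{s}\geq0$, and obtain $\alpha_{\max}$ as the maximum of the linear functional $\mathbf{1}\cdot\mathbf{s}$ over the resulting nonempty compact polytope. Your explicit justification of the collapse to global sections and of attainment of the maximum merely fills in details the paper leaves implicit.
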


This property can be proved by employing the linear programming representation
of the relation between an $\alpha$-coupling and the context-wise
distributions in the system. This representation is essentially the
same as one routinely used in both STC \cite{AbramBarbMans2017,AbramskyBrand2011}
and CbD \cite{KUjDzh2019Measures,DzhCerKuj2017}. 
\begin{proof}
We represent the system $\mathcal{R}$ and an \emph{$\alpha$-}coupling
by probability vectors $\mathbf{r}$ and $\mathbf{s}$, respectively,
such that
\begin{equation}
\mathbf{B}\mathbf{s}\leq\mathbf{r}\textnormal{ (componentwise)},
\end{equation}
with the following meaning of the terms. The entries of $\mathbf{r}$
are context-wise joint probabilities 
\begin{equation}
\Pr\left[R_{q}^{c}=r_{q}:q\in Q,q\prec c\right],\label{eq:bunches}
\end{equation}
across all $c\in C$ and all combinations of $r_{q}=+1/-1$. The entries
of $\mathbf{z}$ are joint probabilities
\begin{equation}
\Pr\left[^{\alpha}S_{q}^{c}=s_{q}:c\in C,q\in Q,q\prec c\right],\label{eq:hidden}
\end{equation}
across all combinations of $s_{q}=+1/-1$. $\mathbf{B}$ is a Boolean
matrix (``incidence matrix'' in Ref. \cite{AbramskyBrand2011})
with rows indexed by the values of $c$, and, for each $c$, by the
combinations of $r_{q}$-values in (\ref{eq:bunches}). Its columns
are indexed by the combinations of $s_{q}$-values in (\ref{eq:hidden}).
A cell of $\mathbf{B}$ is filled with 1 if its $s_{q}$-combination
contains its $r_{q}$-combination for the corresponding $q$-values,
otherwise the cell is filled with 0. The set of all \emph{$\alpha$-}couplings
of $\mathcal{R}$ is represented by the (obviously nonempty) polytope

\begin{equation}
\mathbb{Z}=\{\mathbf{s}:\mathbf{B}\mathbf{s}\leq\mathbf{\mathbf{r}},\mathbf{s}\geq0,\mathbf{1}\cdot\mathbf{s}\leq1\}.\label{eq: CNTF_Z}
\end{equation}
Every linear functional, including $\mathbf{1}\cdot\mathbf{s}$, attains
its extrema within this polytope, and the maximum value of $\mathbf{1}\cdot\mathbf{s}$
is taken as $\alpha_{\max}$. 
\end{proof}
\subsection{}

The quantity $\alpha_{\max}$ is called \emph{noncontextual fraction},
and $1-\alpha_{\max}$ is called \emph{contextual fraction}. It is
easy to see that if $\alpha_{\max}=1$, then $\mathbf{B}\mathbf{s}=\mathbf{\mathbf{r}}$,
and the system is noncontextual. Otherwise it is contextual, and the
contextual fraction is a natural measure of the degree of contextuality. 

\subsection{}

Abramsky and colleagues single out the case $\alpha_{\max}=0$, calling
such systems \emph{strongly contextual}. In strongly contextual systems,
every possible combination of $s_{q}$-values has the probability
of zero. This is the situation one encounters with the Kochen-Specker
systems \cite{KS1967} and with the Popescu-Rohrlich boxes \cite{PR1994}.

\section{Consistified Systems}

\subsection{}\label{subsec:first}

Can STC, and contextual fraction in particular, be generalized to
arbitrary, generally inconsistently connected, systems? It turns out
this can be done by a simple procedure that converts an inconsistently
connected system $\mathcal{R}$ into a \emph{contextually equivalent}
consistently connected one, $\mathcal{R}^{\ddagger}$. Contextual
equivalence means that $\mathcal{R}$ is contextual if and only if
so is $\mathcal{R}^{\ddagger}$, and that if $\mathcal{R}$ is consistently
connected, then $\mathcal{R}^{\ddagger}$ has the same value of contextual
fraction.

\subsection{}

The discussion of the procedure of \emph{consistification} is helped
by two additional CbD terms. Let us call all random variables sharing
a context a \emph{bunch}, 
\begin{equation}
\mathcal{R}^{c}=\left\{ R_{q}^{c}:q\in Q,q\prec c\right\} =R^{c},\label{eq:bunch}
\end{equation}
and all random variables sharing a content a \emph{connection},
\begin{equation}
\mathcal{R}_{q}=\left\{ R_{q}^{c}:c\in C,q\prec c\right\} .\label{eq:connection}
\end{equation}
 The terminology is intuitive: a bunch is jointly distributed, and
different bunches are disjoint, but the fact that some random variables
in different contexts have the same content creates connections between
the bunches. 

\subsection{}

Within a connection (\ref{eq:connection}) the random variables are
stochastically unrelated, but they can be coupled by
\begin{equation}
\mathcal{T}_{q}=\left\{ T_{q}^{c}:c\in C,q\prec c\right\} =T_{q},\label{eq:couplingT}
\end{equation}
and among all such couplings one can seek a \emph{multimaximal coupling},
one that maximizes the probabilities for all equalities 
\begin{equation}
T_{q}^{c}=T_{q}^{c'},q\prec c,c'.
\end{equation}

\subsection{}

If, as we have agreed, all random variables in the system are dichotomous,
then we have the following result, proved in \cite{DzhKuj2017_2.0}. 
\begin{thm}
\label{thm:pop}Any connection has one and only one multimaximal coupling.
$\mathcal{T}_{q}$ is a multimaximal coupling of the connection $\mathcal{R}_{q}$
if and only if any subset of $\mathcal{T}_{q}$ is a maximal coupling
of the corresponding subset of $\mathcal{R}_{q}$.
\end{thm}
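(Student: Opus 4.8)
The plan is to reduce the whole statement to the one-parameter description of a dichotomous connection and then to identify the multimaximal coupling with the ``comonotone'' coupling built from a single shared uniform variable. Since every $R_q^c$ takes values in $\left\{-1,1\right\}$, it is completely specified by $p_c:=\Pr\left[R_q^c=1\right]$, and I may relabel the contexts so that $p_{c_1}\le p_{c_2}\le\cdots\le p_{c_n}$. First I would record the elementary two-variable fact: for dichotomous $X,Y$ with $\Pr\left[X=1\right]=p\le p'=\Pr\left[Y=1\right]$, a coupling's joint law has a single free parameter once the marginals are fixed, and $\Pr\left[X=Y\right]\le 1-\left|p-p'\right|$, with equality if and only if $\Pr\left[X=1,Y=-1\right]=0$, i.e. if and only if $\left\{X=1\right\}\subseteq\left\{Y=1\right\}$ almost surely. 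Thus the pairwise maximal coupling is unique in distribution and is characterized by this \emph{nestedness} of the events $\left\{X=1\right\}$.

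For existence I would define $T_q^c:=1$ on $\left\{U\le p_c\right\}$ and $-1$ otherwise, using one uniform $U$ on $\left[0,1\right]$. Then $\Pr\left[T_q^c=1\right]=p_c$, and because the events $\left\{U\le p_c\right\}$ are nested, a direct computation gives $\Pr\left[T_q^c=T_q^{c'}\right]=1-\left|p_c-p_{c'}\right|$ simultaneously for every pair, so this coupling is multimaximal. The same nestedness yields, for every subset $A$ of contexts, $\Pr\left[\textnormal{all }T_q^c\textnormal{ equal, }c\in A\right]=\min_{c\in A}p_c+\left(1-\max_{c\in A}p_c\right)$; since (recalling that a maximal coupling of a set of dichotomous variables maximizes the probability that they are all equal) the two extreme variables already bound this probability above by $1-\left(\max_{A}p_c-\min_{A}p_c\right)$, the restriction of the comonotone coupling to $A$ is a maximal coupling. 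Hence the comonotone coupling is subset-maximal for every subset.

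For uniqueness and the equivalence I would argue conversely: suppose $\mathcal{T}_q$ is multimaximal. Applied to each pair with $p_c\le p_{c'}$, the two-variable fact forces $\left\{T_q^c=1\right\}\subseteq\left\{T_q^{c'}=1\right\}$ almost surely, and transitivity along the ordered chain shows the events $\left\{T_q^{c_1}=1\right\}\subseteq\cdots\subseteq\left\{T_q^{c_n}=1\right\}$ are a.s. nested. A nested family with prescribed probabilities $p_{c_i}$ determines the joint law uniquely, its atoms being $\left\{T_q^{c_i}=1\right\}\setminus\left\{T_q^{c_{i-1}}=1\right\}$ with probabilities $p_{c_i}-p_{c_{i-1}}$, and this law coincides with that of the comonotone coupling. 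This establishes both uniqueness and that the (unique) multimaximal coupling is the subset-maximal comonotone coupling, giving the forward direction of the stated equivalence. The reverse direction is immediate: a coupling all of whose subsets are maximal has in particular all of its pairs maximal, which is precisely multimaximality.

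The hard part will be the uniqueness/forward direction, where one must verify that a purely \emph{pairwise} optimality condition propagates to the \emph{global} monotone structure and that this structure leaves no freedom in the joint law. The crux is the transitivity argument---that $\left\{T^a=1\right\}\subseteq\left\{T^b=1\right\}$ and $\left\{T^b=1\right\}\subseteq\left\{T^c=1\right\}$ a.s. chain consistently---together with the observation that nestedness pins down every joint probability, so that no non-comonotone coupling can be multimaximal. Some care is also needed with ties $p_c=p_{c'}$, which force $T_q^c=T_q^{c'}$ a.s. and are harmless for nesting, and with the ``almost surely'' qualifiers throughout.
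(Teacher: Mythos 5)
Your proof is correct and complete. Note that the paper itself gives no proof of this theorem---it only cites Dzhafarov and Kujala (2017), ``Contextuality-by-Default 2.0''---and your argument (reduce to the single parameter $p_c=\Pr[R_q^c=1]$, characterize pairwise maximality by a.s.\ nestedness of the events $\{T_q^c=1\}$, realize the unique such law by the quantile/comonotone coupling $T_q^c=1$ on $\{U\le p_c\}$, and read off subset-maximality from the nested structure) is essentially the standard one used in that reference for dichotomous connections. The one step worth stating explicitly in a written version is that the $n+1$ ``threshold'' atoms of a nested family exhaust the support, which is what pins down the joint law and hence uniqueness.
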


The second statement means that, for any part 
\[
\mathcal{R}'_{q}=\left\{ R_{q}^{c}:c\in\left\{ c_{1},\ldots,c_{k}\right\} \subseteq C,q\prec c\right\} ,
\]
of the connection (\ref{eq:connection}), the event
\[
T_{q}^{c_{1}}=\ldots=T_{q}^{c_{k}}
\]
has the maximal possible probability among all possible couplings
of $\mathcal{R}'_{q}$ (or, equivalently, given the marginal distributions
of $T_{q}^{c_{i}}\sim R_{q}^{c_{i}}$, $i=1,\ldots,k$).

\subsection{}\label{subsec:firstdreason}

The theorem above is one of the two reasons why CbD subjects any set
of measurements to dichotomization before making it a system of random
variables amenable to contextuality analysis. The consistification
procedure to be described is based on the possibility to find unique
multimaximal couplings for all connections. 

\subsection{}\label{subsec:secondreason}

For completeness, I should mention the second, and main, reason for
the dichotomization of all random variables: it prevents the otherwise
possible situation when coarse-graining of the random variables in
a noncontextual system makes it contextual \cite{DzhCerKuj2017}.
Clearly, a good theory of contextuality should not have this property.
Note that dichotomization does not lose any information extractable
from random variables before they are dichotomized. It does not even
increase the size of a system, if size is measured by the cardinality
of the supports of the system's bunches.

\subsection{}

The idea of consistification is to treat the multimaximal couplings
of connections as if they were additional bunches. This is implicit
in any CbD-based algorithm for establishing or measuring contextuality
\cite{KujDzhProof2016,KujDzhLar2015,KUjDzh2019Measures}. Explicitly,
however, it was first described by Amaral and coauthors \cite{AmaralDuarteOliveira2018}.
However, Amaral and coauthors use maximal couplings instead of the
multimaximal ones (as we did in the older version of CbD, e.g., in
\cite{DzhKuj2016}), and they allow for multivalued variables. The
difference between the two types of couplings of a set $\left\{ X_{1},\ldots,X_{n}\right\} $
is that in the multimaximal coupling $\left\{ Y_{1},\ldots,Y_{n}\right\} $
we maximize probabilities of all equalities $Y_{i}=Y_{j}$ (whence
it follows, by Theorem \ref{thm:pop}, that we also maximize the probability
of $Y_{i_{1}}=Y_{i_{2}}=\ldots=Y_{i_{k}}$ for any subset of $\left\{ Y_{1},\ldots,Y_{n}\right\} $),
whereas a maximal coupling $\left\{ Z_{1},\ldots,Z_{n}\right\} $
only maximizes the probability of the single chain equality $Z_{1}=Z_{2}=\ldots=Z_{n}$.
Maximal couplings generally are not unique, even for dichotomous variables
(if there are more than two of them). Since measures of (non)contextuality
generally depend on what couplings are being used, the approach advocated
in Ref. \cite{AmaralDuarteOliveira2018} faces the problem of choice.
In addition, a system declared noncontextual using maximal rather
than multimaximal couplings may have contextual subsystems, obtained
by dropping some of the variables. I consider this possibility highly
undesirable for a theory of contextuality.

\subsection{}

The particular consistification scheme presented below is an elaboration
of one described to me by Janne Kujala (personal communication, November
2018).

\subsection{}

Given an arbitrary system $\mathcal{R}$, the new system $\mathcal{R}^{\ddagger}$
has a set $Q^{\ddagger}$ of ``new'' contents, a set $C^{\ddagger}$
of ``new'' contexts, and a ``new'' is-measured-in relation $\prec^{\ddagger}$.
The corresponding constructs in the original system, $Q$, $C$, and
$\prec$, will be called ``old''.

\subsection{}

For each random variable $R_{j}^{i}$ in $\mathcal{R}$ we form a
new content, denoted $q_{j}^{i}$. The set of all new contents is
\begin{equation}
Q^{\ddagger}=\left\{ q_{j}^{i}:c^{i}\in C,q_{j}\in Q,q_{j}\prec c^{i}\right\} .
\end{equation}
The number of the new contents is the cardinality of $\prec$, which
cannot exceed $\left|C\times Q\right|$. 

\subsection{}

New contexts are formed as the set 
\[
C^{\ddagger}=C\sqcup Q,
\]
and their number is $\left|C\right|+\left|Q\right|$.

\subsection{}

The new is-measured-in relation is
\begin{equation}
\prec^{\ddagger}=\left\{ \left(q_{j}^{i},c^{i}\right):c^{i}\in C,q_{j}\in Q,q_{j}\prec c^{i}\right\} \sqcup\left\{ \left(q_{j}^{i},q_{j}\right):c^{i}\in C,q_{j}\in Q,q_{j}\prec c^{i}\right\} .
\end{equation}
That is, a new content $q_{j}^{i}$ is measured in the new contexts
$c^{i}$ and $q_{j}$ only.

\subsection{}

Each $\left(q_{j}^{i},c^{i}\right)$-cell contains the old random
variables $R_{j}^{i}$. The new bunch
\begin{equation}
R^{i}=\left\{ R_{j}^{i}:q_{j}^{i}\in Q^{\ddagger},q_{j}^{i}\prec^{\ddagger}c^{i}\right\} 
\end{equation}
coincides with the old bunch
\begin{equation}
R^{i}=\left\{ R_{j}^{i}:q_{j}\in Q,q_{j}\prec c^{i}\right\} .
\end{equation}

\subsection{}

Each $\left(q_{j}^{i},q_{j}\right)$-cell contains a new random variable
$V_{j}^{i}$ whose distribution is the same as that of $R_{j}^{i}$.
The bunch
\begin{equation}
V^{j}=\left\{ V_{j}^{i}:q_{j}^{i}\in Q^{\ddagger},q_{j}^{i}\prec^{\ddagger}q_{j}\right\} 
\end{equation}
is the multimaximal coupling of the old connection
\begin{equation}
\mathcal{R}_{j}=\left\{ R_{j}^{i}:c^{i}\in C,q_{j}\prec c^{i}\right\} .
\end{equation}

\subsection{}

Using our examples (\ref{eq:exmaple1}) and (\ref{eq:cyclic2}), the
corresponding consistified systems are
\begin{equation}
{\normalcolor \begin{array}{|c|c|c|c|c|c|c|c|c|c|c|c||c|}
\hline \begin{array}{c}
\\
\\
\end{array}R_{1}^{1} & R_{2}^{1} &  &  &  &  &  &  &  &  &  &  & c^{1}\\
\hline \begin{array}{c}
\\
\\
\end{array} &  & R_{2}^{2} & R_{3}^{2} & R_{4}^{2} &  &  &  &  &  &  &  & c^{2}\\
\hline \begin{array}{c}
\\
\\
\end{array} &  &  &  &  & R_{1}^{3} & R_{3}^{3} &  &  &  &  &  & c^{3}\\
\hline \begin{array}{c}
\\
\\
\end{array} &  &  &  &  &  &  & R_{1}^{4} & R_{4}^{4} &  &  &  & c^{4}\\
\hline \begin{array}{c}
\\
\\
\end{array} &  &  &  &  &  &  &  &  & R_{1}^{5} & R_{2}^{5} & R_{3}^{5} & c^{5}\\
\hline \begin{array}{c}
\\
\\
\end{array}V_{1}^{1} &  &  &  &  & V_{1}^{3} &  & V_{1}^{4} &  & V_{4}^{4} &  &  & q_{1}\\
\hline \begin{array}{c}
\\
\\
\end{array} & V_{2}^{1} & V_{2}^{2} &  &  &  &  &  &  &  & V_{2}^{5} &  & q_{2}\\
\hline \begin{array}{c}
\\
\\
\end{array} &  &  & V_{3}^{2} &  &  & V_{3}^{3} &  &  &  &  & V_{3}^{5} & q_{3}\\
\hline \begin{array}{c}
\\
\\
\end{array} &  &  &  & V_{4}^{2} &  &  &  & V_{4}^{4} &  &  &  & q_{4}\\
\hline\hline \begin{array}{c}
\\
\\
\end{array}q_{1}^{1} & q_{2}^{1} & q_{2}^{2} & q_{3}^{2} & q_{4}^{2} & q_{1}^{3} & q_{3}^{3} & q_{1}^{4} & q_{4}^{4} & q_{1}^{5} & q_{2}^{5} & q_{3}^{5} & \mathcal{A}^{\ddagger}
\\\hline \end{array}}\label{eq: consistified}
\end{equation}
and
\begin{equation}
\begin{array}{|c|c|c|c||c|}
\hline \begin{array}{c}
\\
\\
\end{array}R_{1}^{1} & R_{2}^{1} &  &  & c^{1}\\
\hline \begin{array}{c}
\\
\\
\end{array} &  & R_{1}^{2} & R_{2}^{2} & c^{2}\\
\hline \begin{array}{c}
\\
\\
\end{array}V_{1}^{1} &  & V_{1}^{2} &  & q_{1}\\
\hline \begin{array}{c}
\\
\\
\end{array} & V_{2}^{1} &  & V_{2}^{2} & q_{2}\\
\hline\hline \begin{array}{c}
\\
\\
\end{array}q_{1}^{1} & q_{2}^{1} & q_{1}^{2} & q_{2}^{2} & \mathcal{C}_{2}^{\ddagger}
\\\hline \end{array}\:.\label{eq:consistifiedC2}
\end{equation}

\subsection{}

Note the following properties of all consistified systems.
\begin{lyxlist}{00.00.0000}
\item [{1.}] Bunches corresponding to different old contexts, $c^{i},c^{i'},$
are disjoint.
\item [{2.}] Bunches corresponding to different old contents, $q_{j},q_{j'}$,
are disjoint.
\item [{3.}] A bunch corresponding to an old content, $q_{j}$, and a bunch
corresponding to an old context, $c^{i}$, have at most one connection
between them, $\left\{ R_{j}^{i},V_{j}^{i}\right\} $.
\item [{4.}] The connection corresponding to any new content $q_{j}^{i}$
contains precisely two random variables, $R_{j}^{i}$ and $V_{j}^{i}$,
with the same distribution.
\end{lyxlist}
\subsection{}

Property 3 above means that in a consistified system the notions of
simple consistent connectedness and strong consistent connectedness
coincide. As mentioned earlier, in Section \ref{subsec:exmapleC2},
system $\mathcal{C}_{2}$ in (\ref{eq:cyclic2}) will only be considered
in STC if the two bunches $\left\{ R_{1}^{1},R_{2}^{1}\right\} $
and $\left\{ R_{1}^{2},R_{2}^{2}\right\} $ are identically distributed,
which would make this system trivial. By contrast, if one adopts the
CbD-based consistification of $\mathcal{C}_{2}$, in (\ref{eq:consistifiedC2}),
its STC analysis will be the same as CbD's.

\subsection{}

The following fact ensures that the generalization of the contextual
fraction coincides with the original one in the case of consistently
connected systems.
\begin{thm}
If a system $\mathcal{R}$ is consistently connected, then its contextual
fraction is the same as that of $\mathcal{R}^{\ddagger}$. (In particular,
$\mathcal{R}$ is contextual if and only if so is $\mathcal{R}^{\ddagger}$.)
\end{thm}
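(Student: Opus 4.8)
The plan is to prove equality of the two noncontextual fractions, $\alpha_{\max}(\mathcal{R})=\alpha_{\max}(\mathcal{R}^{\ddagger})$, from which both assertions follow (the contextual fractions are their complements, and a system is contextual precisely when $\alpha_{\max}<1$). Since $\mathcal{R}^{\ddagger}$ is always consistently connected (Properties 3 and 4 above), its contextual fraction is computed through exactly the identically connected $\alpha$-coupling polytope of Theorem~\ref{thm:pip}, the same device that applies to the consistently connected $\mathcal{R}$. So it suffices to set up a mass-preserving bijection between the identically connected $\alpha$-couplings of the two systems.

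The first step is to record what consistent connectedness of $\mathcal{R}$ does to the $V$-bunches. In a connection $\mathcal{R}_{j}$ all variables $R_{j}^{i}$ share one dichotomous law, say $\Pr[R_{j}^{i}=+1]=p_{j}$; hence every equality $R_{j}^{c}=R_{j}^{c'}$ can be given probability $1$, so the unique (by Theorem~\ref{thm:pop}) multimaximal coupling $V^{j}$ is the diagonal law placing mass $p_{j}$ on the configuration with all coordinates $+1$ and mass $1-p_{j}$ on the configuration with all coordinates $-1$. In other words the $V^{j}$-bunch forces its new contents $\{q_{j}^{i}\}$ to be equal almost surely. This is the one place where the hypothesis is essential.

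Next I would construct the two maps. Given an identically connected $\alpha$-coupling of $\mathcal{R}$, i.e. a sub-distribution $\mathbf{s}$ on $\{-1,+1\}^{Q}$ whose context marginals are majorized by the bunch laws, I extend it to $\mathcal{R}^{\ddagger}$ by assigning the value $s_{q_{j}}$ to every new content $q_{j}^{i}$. The $c^{i}$-bunch constraints are inherited verbatim; the $V^{j}$-bunch constraint holds because the extended marginal on $\{q_{j}^{i}\}$ is the diagonal law with weights $\Pr[s_{q_{j}}=\pm 1]$, and marginalizing the context constraints of $\mathbf{s}$ down to the single content $q_{j}$ yields $\Pr[s_{q_{j}}=+1]\le p_{j}$ and $\Pr[s_{q_{j}}=-1]\le 1-p_{j}$ (consistent connectedness is used once more, to identify the one-content marginal of each bunch law with the connection law). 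Conversely, any identically connected $\alpha$-coupling $\mathbf{s}^{\ddagger}$ of $\mathcal{R}^{\ddagger}$ must, by the diagonal form of each $V^{j}$, put zero mass on every event on which two new contents $q_{j}^{i},q_{j}^{i'}$ sharing a lower index disagree; collapsing each such family to its common value $s_{q_{j}}$ produces a sub-distribution on $\{-1,+1\}^{Q}$ whose marginal on each block $\{q_{j}:q_{j}\prec c^{i}\}$ equals the corresponding $\mathbf{s}^{\ddagger}$-marginal and is therefore majorized by the $c^{i}$-bunch law, that is, by the context law of $\mathcal{R}$. Both maps preserve total mass $\mathbf{1}\cdot\mathbf{s}$ and are mutually inverse, giving the desired bijection and hence $\alpha_{\max}(\mathcal{R})=\alpha_{\max}(\mathcal{R}^{\ddagger})$.

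I expect the main obstacle to be the careful verification of the $V^{j}$-bunch majorization in the forward direction, specifically the derivation of the one-content bound $\Pr[s_{q_{j}}=+1]\le p_{j}$ from the context constraints. This is exactly the step that breaks when $\mathcal{R}$ is only inconsistently connected: there the $R_{j}^{i}$ carry different marginals, the diagonal $V^{j}$ is replaced by a genuinely two-dimensional multimaximal law, and the clean collapse of the $\{q_{j}^{i}\}$ is no longer available. I would therefore isolate the one-content marginalization as a short lemma and flag explicitly each point at which consistent connectedness is invoked.
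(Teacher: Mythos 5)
Your argument is correct and amounts to a careful, fully spelled-out version of the paper's own one-line proof: the paper likewise rests on the observation that an identically connected $\alpha$-coupling of $\mathcal{R}^{\ddagger}$ must put zero mass on any state where the $R_{q}^{c}$- and $V_{q}^{c}$-values disagree, which (together with the diagonal form of each $V^{j}$ under consistent connectedness) yields exactly your mass-preserving bijection between the two $\alpha$-coupling polytopes. No gap; you have simply made explicit the verifications the paper leaves to the reader.
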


\begin{proof}
Immediately follows from the observation that any state of an $\alpha$-coupling
in which the values corresponding to $R_{q}^{c}$ and $V_{q}^{c}$
are different has the probability zero. 
\end{proof}
\subsection{}

For completeness, I formulate the following as a formal statement.
Recall the definition of contextual equivalence in Section \ref{subsec:first}.
\begin{thm}
Any system $\mathcal{R}$ is contextually equivalent to its consistification
$\mathcal{R}^{\ddagger}$.
\end{thm}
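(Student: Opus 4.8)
The asserted contextual equivalence unpacks into two claims: that $\mathcal{R}$ is contextual if and only if $\mathcal{R}^{\ddagger}$ is, and that the two contextual fractions coincide whenever $\mathcal{R}$ is consistently connected. The second claim is exactly Theorem 5, so my plan is to prove only the first, the two-way contextuality equivalence, now for an \emph{arbitrary} $\mathcal{R}$. The starting observation is that, by Property 3, every consistified system is (simply, hence strongly) consistently connected: each new content $q_j^i$ is shared only by $R_j^i$ and $V_j^i$, and these are equidistributed by construction. Therefore Definition 2 applies to $\mathcal{R}^{\ddagger}$, and $\mathcal{R}^{\ddagger}$ is noncontextual precisely when it admits an \emph{identically connected} coupling.

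The heart of the argument is a correspondence between identically connected couplings of $\mathcal{R}^{\ddagger}$ and multimaximally connected couplings of $\mathcal{R}$. In an identically connected coupling of $\mathcal{R}^{\ddagger}$ the two coupling variables attached to each new content $q_j^i$ must agree almost surely, so one may identify the $R$-descendant with its $V$-descendant and collapse the whole coupling to a single joint distribution on the old variables $\{R_j^i\}$. Under this identification the old-bunch marginals become the ordinary bunch-marginal constraints of a coupling of $\mathcal{R}$, while the new-bunch marginals impose that the content-wise marginal over each connection $\mathcal{R}_j$ equal the prescribed distribution $V^j$. This last condition is where Theorem 4 does the decisive work: since $V^j$ is the \emph{unique} multimaximal coupling of $\mathcal{R}_j$, requiring that the content-wise marginal equal $V^j$ is literally the same requirement as requiring it to be multimaximally connected. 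Hence the collapsed distributions are exactly the multimaximally connected couplings of $\mathcal{R}$, and conversely each such coupling, doubled into identified $R/V$ pairs, yields an identically connected coupling of $\mathcal{R}^{\ddagger}$. An object of one kind exists if and only if an object of the other does, which is the sought equivalence.

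The step I expect to require the most care is verifying that the maximization built into the notion of a multimaximally connected coupling really does decouple along connections, so that it is governed solely by the per-connection data to which Theorem 4 applies. Concretely, one must check that the maximal attainable value of $\Pr[S_q^c=S_q^{c'}]$ over \emph{all} couplings of $\mathcal{R}$ coincides with its maximum over couplings of the bare connection $\mathcal{R}_q$; this holds because that probability depends only on the fixed marginals of the pair, and any prescribed cross-context coupling of the pair extends to a full coupling of $\mathcal{R}$ by drawing the remaining bunch-mates from the relevant conditional distributions. Once this is in place, the correspondence is immediate and, combined with Theorem 5, delivers the full contextual equivalence of $\mathcal{R}$ and $\mathcal{R}^{\ddagger}$.
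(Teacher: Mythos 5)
Your proof is correct and amounts to the same argument the paper gives: the paper disposes of this theorem by citing the preceding theorem together with the ``obvious fact'' that $\mathcal{R}$ and $\mathcal{R}^{\ddagger}$ have the same linear programming representation, and your coupling-level correspondence (collapse the almost-surely equal $R/V$ pairs, then use the uniqueness statement of Theorem \ref{thm:pop} plus the extension-of-pairwise-couplings observation to identify the connection marginals with multimaximal connectedness) is precisely that fact spelled out in the language of couplings. The only cosmetic slip is that the equidistribution of $R_{j}^{i}$ and $V_{j}^{i}$ is Property 4 of consistified systems, not Property 3.
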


\begin{proof}
Follows from the previous theorem, and the obvious fact that $\mathcal{R}$
and $\mathcal{R}^{\ddagger}$ have the same linear programming representation
(see Ref. \cite{KUjDzh2019Measures} for a detailed description of
the latter). 
\end{proof}

\section{Deterministic Systems\label{sec:Deterministic-Systems}}

\subsection{}

Deterministic systems can be viewed as systems of random variables
whose distributions attain specific values with probability 1. In
CbD, therefore, they are treated as a special case of systems of random
variables, with the following general result.
\begin{thm}
Any deterministic system is noncontextual.
\end{thm}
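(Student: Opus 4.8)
The plan is to exhibit the essentially unique coupling forced by determinism and to verify directly that it is multimaximally connected. First I would observe that if every random variable $R_{q}^{c}$ in the system is deterministic, say $R_{q}^{c}\equiv r_{q}^{c}$ with $r_{q}^{c}\in\left\{ -1,1\right\} $, then each bunch $R^{c}$ has as its joint distribution a single point mass at $\left(r_{q}^{c}\right)_{q\prec c}$. Since a coupling $\mathcal{S}$ must reproduce every context-wise marginal, its restriction to each context $c$ must likewise be the point mass at $\left(r_{q}^{c}\right)_{q\prec c}$, which forces $S_{q}^{c}\equiv r_{q}^{c}$ almost surely for every $q\prec c$. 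Thus the coupling is itself deterministic and unique up to the equivalence of couplings: the distribution of $\mathcal{S}$ is the single atom at $\left(r_{q}^{c}\right)$ over all $q\prec c$.

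Next I would check that this forced coupling is multimaximally connected, i.e., that for every content-sharing pair $S_{q}^{c},S_{q}^{c'}$ the probability $\Pr\left[S_{q}^{c}=S_{q}^{c'}\right]$ is maximal given the (deterministic) marginals. Here the values are fixed, so $\Pr\left[S_{q}^{c}=S_{q}^{c'}\right]$ equals $1$ when $r_{q}^{c}=r_{q}^{c'}$ and $0$ when $r_{q}^{c}\neq r_{q}^{c'}$. The standard coupling bound gives, for any coupling of the marginals $R_{q}^{c}$ and $R_{q}^{c'}$, that $\Pr\left[S_{q}^{c}=S_{q}^{c'}\right]\leq\sum_{x}\min\left(\Pr\left[R_{q}^{c}=x\right],\Pr\left[R_{q}^{c'}=x\right]\right)$; for point masses this upper bound is exactly $1$ in the first case and $0$ in the second. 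In both cases the forced coupling attains the bound, so it is multimaximal, and by the main definition the system is noncontextual.

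I do not expect a genuine obstacle: the content of the argument is simply that determinism collapses the space of couplings to a single point, and that this lone coupling trivially saturates each pairwise maximization --- either because two content-sharing variables already agree (probability $1$) or because no coupling whatsoever can make them agree (maximum $0$). The only point requiring a word of care is the case $r_{q}^{c}\neq r_{q}^{c'}$, where one might worry that the unique coupling ``fails'' to connect the two variables; the resolution is that multimaximality is measured relative to the fixed marginals, and when these are disjoint point masses the maximal attainable agreement probability is already $0$, which the coupling does attain. Alternatively, one could route the whole argument through Theorem \ref{thm:pop}, noting that for a deterministic connection the multimaximal coupling is itself deterministic and hence automatically compatible with the forced bunches, so that assembling the multimaximal couplings of all connections yields a global multimaximally connected coupling.
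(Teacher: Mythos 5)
Your proposal is correct and follows essentially the same route as the paper: determinism forces a unique coupling, and each pairwise agreement probability (either $1$ or $0$) trivially attains the maximum permitted by the fixed point-mass marginals, so the coupling is multimaximally connected. Your explicit invocation of the coupling bound $\sum_{x}\min\left(\Pr\left[R_{q}^{c}=x\right],\Pr\left[R_{q}^{c'}=x\right]\right)$ merely spells out what the paper states more tersely.
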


\begin{proof}
A deterministic system 
\[
\mathcal{R}=\left\{ R_{q}^{c}\equiv r_{q}^{c}:c\in C,q\in Q,q\prec c\right\} ,
\]
where $\equiv$ means equality with probability 1, has a single overall
coupling,
\[
\mathcal{S}=\left\{ S_{q}^{c}\equiv r_{q}^{c}:c\in C,q\in Q,q\prec c\right\} ,
\]
with all $S_{q}^{c}$ defined on an arbitrary probability space. Since
$\left\{ S_{q}^{c}\equiv r_{q}^{c},S_{q}^{c'}\equiv r_{q}^{c'}\right\} $
is the only coupling of $\left\{ R_{q}^{c}\equiv r_{q}^{c},R_{q}^{c'}\equiv r_{q}^{c'}\right\} $,
the probability of $S_{q}^{c}=S_{q}^{c'}$ (0 or 1) is maximal possible,
whence $\mathcal{S}$ is multimaximally connected.
\end{proof}
\subsection{}

This simple observation seems to put CbD at odds with STC, where the
theoretical ideas formulated in algebraic and topological terms are
not restricted to random variables. Consider, e.g., two deterministic
systems that have the same $\prec$-format as the system $\mathcal{C}_{2}$
in example (\ref{eq:cyclic2}): 
\begin{equation}
\begin{array}{|c|c||c|}
\hline \begin{array}{c}
\\
\\
\end{array}R_{1}^{1}\equiv1 & R_{2}^{1}\equiv-1 & c^{1}\\
\hline \begin{array}{c}
\\
\\
\end{array}R_{1}^{2}\equiv1 & R_{2}^{2}\equiv-1 & c^{2}\\
\hline\hline \begin{array}{c}
\\
\\
\end{array}q_{1} & q_{2} & \mathcal{C}_{2.1}
\\\hline \end{array}\textnormal{ and }\begin{array}{|c|c||c|}
\hline \begin{array}{c}
\\
\\
\end{array}R_{1}^{1}\equiv1 & R_{2}^{1}\equiv-1 & c^{1}\\
\hline \begin{array}{c}
\\
\\
\end{array}R_{1}^{2}\equiv1 & R_{2}^{2}\equiv1 & c^{2}\\
\hline\hline \begin{array}{c}
\\
\\
\end{array}q_{1} & q_{2} & \mathcal{C}_{2.2}
\\\hline \end{array}\:.\label{eq:2determ}
\end{equation}
System $\mathcal{C}_{2.1}$ is consistently connected, which in a
deterministic system means it is strongly consistently connected.
It is therefore trivially noncontextual. System $\mathcal{C}_{2.2}$
is inconsistently connected. Strictly speaking, therefore, the original
STC analysis should not be applicable to this system, as it violates
the fundamental assumption underlying STC. 

\subsection{}

If we use the extended version of STC, with the help of consistification,
we get
\begin{equation}
\begin{array}{|c|c|c|c||c|}
\hline \begin{array}{c}
\\
\\
\end{array}R_{1}^{1}\equiv1 & R_{2}^{1}\equiv-1 &  &  & c^{1}\\
\hline \begin{array}{c}
\\
\\
\end{array} &  & R_{1}^{2}\equiv1 & R_{2}^{2}\equiv1 & c^{2}\\
\hline \begin{array}{c}
\\
\\
\end{array}V_{1}^{1}\equiv1 &  & V_{1}^{2}\equiv1 &  & q_{1}\\
\hline \begin{array}{c}
\\
\\
\end{array} & V_{2}^{1}\equiv-1 &  & V_{2}^{2}\equiv1 & q_{2}\\
\hline\hline \begin{array}{c}
\\
\\
\end{array}q_{1}^{1} & q_{2}^{1} & q_{1}^{2} & q_{2}^{2} & \mathcal{C}_{2.2}^{\ddagger}
\\\hline \end{array}\:.
\end{equation}
This system is trivially noncontextual by the STC/CbD definition. 

\subsection{}

This reasoning would apply to any deterministic system: if it is consistently
connected (or consistified), it is trivially noncontextual, and if
it is inconsistently connected, STC should place it outside its sphere
of applicability (or consistify it). In other words, STC with consistification
and CbD treat deterministic systems identically (finding them noncontextual).
There is, of course, a simple way out: to complement STC with the
additional stipulation that all inconsistently connected systems are
contextual. STC would then have to allow for contextual systems whose
degree of contextuality cannot be measured by contextual fraction.
I do not think this simple way out is intellectually satisfactory. 

\subsection{}

Here is a good place to mention that CbD treats inconsistent connectedness
and contextuality as fundamentally different concepts. Inconsistent
connectedness, i.e. the difference in the distributions of $R_{q}^{c}$
and $R_{q}^{c'}$, is interpreted as the result of direct influences
of the contexts upon the measurements. In the case of physical systems,
one can say that some elements of the contexts $c$ and $c'$ differently
affect (in the causal sense) the measurement of the content $q$.
In quantum physics this is reflected by such notions as ``signaling''
or (a better term) ``disturbance''. Contextuality, by contrast,
is non-causal, and reflects the differences between random variables
$R_{q}^{c}$ and $R_{q}^{c'}$ that are above and beyond the differences
in their distributions. 

\subsection{}

This interpretation is philosophically based on the \emph{no-conspiracy
principle} \cite{CervDzhSQ}, according to which in ``not-precariously-unstable''
and ``not-deliberately-contrived'' systems, no differences in the
direct influences exerted by the elements of context are hidden. Being
hidden means that these differences are present but are not reflected
in the differences of the distributions. For instance, if $R_{q}^{c}$
and $R_{q}^{c'}$ attain values $1$ and $-1$ with probability $\frac{1}{2}$
each, and if $c'$ by some causal mechanism reverses (multiplies by
$-1$) each value of $R_{q}^{c'}$, then this influence will be hidden,
as it will not affect the distribution of $R_{q}^{c'}$. The no-conspiracy
principle says this should not be expected to happen, and if it does,
should be expected to disappear by slight modifications of the experimental
set-up. The principle is closely related to the ``no-fine-tuning''
principle advocated by Cavalcanti \cite{Cavalcanti2018} (see a detailed
analysis of these principles by Jones in Ref. \cite{MattJones2019}).

\subsection{}

With this in mind, let us consider an especially elegant application
of STC to an inherently deterministic system, described in Ref. \cite{Abramskyetal2017}.
This is a system whose contents are statements referencing each other's
truth value and forming a version of the Liar antinomy. I will consider
the version with three statements, although any larger number will
be analyzed similarly:

\begin{equation}
\begin{array}{|c|c|c||c|}
\hline \begin{array}{c}
\\
\\
\end{array}R_{1}^{1} & R_{2}^{1} &  & c^{1}\\
\hline \begin{array}{c}
\\
\\
\end{array} & R_{2}^{2} & R_{3}^{2} & c^{2}\\
\hline \begin{array}{c}
\\
\\
\end{array}R_{1}^{3} &  & R_{3}^{3} & c^{3}\\
\hline\hline \begin{array}{c}
\\
\\
\end{array}q_{1}=\textnormal{"}q_{2}\textnormal{ is true"} & q_{2}=\textnormal{"}q_{3}\textnormal{ is true"} & q_{3}=\textnormal{"}q_{1}\textnormal{ is false"} & \mathcal{L}_{3}
\\\hline \end{array}\:.
\end{equation}
The contexts combine the statements one of which references the other,
and the $R_{q}^{c}$ is the truth value (1 or $-1$) of statement
$q$ in context $c$. 

\subsection{}

We could have considered the smaller system
\begin{equation}
\begin{array}{|c|c||c|}
\hline \begin{array}{c}
\\
\\
\end{array}R_{1}^{1} & R_{2}^{1} & c^{1}\\
\hline \begin{array}{c}
\\
\\
\end{array}R_{1}^{2} & R_{2}^{2} & c^{2}\\
\hline\hline \begin{array}{c}
\\
\\
\end{array}q_{1}=\textnormal{"}q_{2}\textnormal{ is true"} & q_{2}=\textnormal{"}q_{1}\textnormal{ is false"} & \mathcal{L}_{2}
\\\hline \end{array}\:,
\end{equation}
representing a more familiar classical form of the antinomy, but the
contexts in $\mathcal{L}_{3}$ are easier to interpret, as the direction
of inference there need not be specified. The interpretation is even
more complicated with the classical form $q=\lyxmathsym{\textquotedblleft}q\textnormal{ is false}$'',
although the reasoning below is still applicable. 

\subsection{}

We can posit that each statement in a given context should have one
definitive truth value, and this makes $\mathcal{L}_{3}$ a deterministic
system. In Ref. \cite{Abramskyetal2017} this system is characterized
as strongly contextual, based on the impossibility to assign the truth
values in a context-independent way. However, we know that the original
version of STC is predicated on the assumption of strong consistent
connectedness, whereas any deterministic realization of $\mathcal{L}_{3}$
(precisely because no context-independent assignment of truth values
exists) is inconsistently connected. Consider one of the eight such
deterministic versions, corresponding to the usual conceptualization
of the Liar Antinomy: 

\begin{equation}
\begin{array}{|c|c|c||c|}
\hline \begin{array}{c}
\\
\\
\end{array}R_{1}^{1}\equiv1 & R_{2}^{1}\equiv1 &  & c^{1}\\
\hline \begin{array}{c}
\\
\\
\end{array} & R_{2}^{2}\equiv1 & R_{3}^{2}\equiv1 & c^{2}\\
\hline \begin{array}{c}
\\
\\
\end{array}R_{1}^{3}\equiv-1 &  & R_{3}^{3}\equiv1 & c^{3}\\
\hline\hline \begin{array}{c}
\\
\\
\end{array}q_{1}=\textnormal{"}q_{2}\textnormal{ is true"} & q_{2}=\textnormal{"}q_{3}\textnormal{ is true"} & q_{3}=\textnormal{"}q_{1}\textnormal{ is false"} & \mathcal{L}_{3.1}
\\\hline \end{array}\:.
\end{equation}
The arguments related to system $\mathcal{C}_{2.2}$ in (\ref{eq:2determ})
apply here fully. I see no reasonable way a system like $\mathcal{L}_{3.1}$
can be treated as contextual, either in CbD or in STC. 

\subsection{}

There is, however, another way of looking at system $\mathcal{L}_{3}$.
It seems to be very much in the spirit of how it is treated in Ref.
\cite{Abramskyetal2017}. Moreover, it corresponds to the traditional
presentation of the Liar antinomy: suppose $R_{1}^{1}\equiv1$, then
it follows that $R_{1}^{1}\equiv-1$; now suppose $R_{1}^{1}\equiv-1$,
then it follows that $R_{1}^{1}\equiv1$. With the stipulation that
each statement in a given context should have one definitive truth
value, $\mathcal{L}_{3}$ can indeed be just one of the eight deterministic
(and inconsistently connected) systems of which $\mathcal{L}_{3.1}$
is one. However, we do not know which of these eight systems to choose,
and we can consider all eight of them as variants of $\mathcal{L}_{3}$:
\[
\begin{array}{c}
\begin{array}{|c|c|c||c|}
\hline 1 & 1 &  & c^{1}\\
\hline  & 1 & 1 & c^{2}\\
\hline -1 &  & 1 & c^{3}\\
\hline\hline q_{1} & q_{2} & q_{3} & \mathcal{L}_{3.1}
\\\hline \end{array}\\
\\
\begin{array}{|c|c|c||c|}
\hline -1 & -1 &  & c^{1}\\
\hline  & -1 & -1 & c^{2}\\
\hline 1 &  & -1 & c^{3}\\
\hline\hline q_{1} & q_{2} & q_{3} & \mathcal{L}_{3.2}
\\\hline \end{array}\\
\\
\begin{array}{|c|c|c||c|}
\hline -1 & -1 &  & c^{1}\\
\hline  & 1 & 1 & c^{2}\\
\hline -1 &  & 1 & c^{3}\\
\hline\hline q_{1} & q_{2} & q_{3} & \mathcal{L}_{3.3}
\\\hline \end{array}\\
\\
etc.
\end{array}
\]

\subsection{}

One can assign Bayesian (or epistemic) probabilities to these possibilities,
a natural choice here being to assign them uniformly. This renders
the system probabilistic in the epistemic sense, with
\begin{equation}
\left\langle R_{q}^{c}\right\rangle _{B}=0,
\end{equation}
for all the ``epistemically-random'' variables (indicated by the
subscript $B$), and
\begin{equation}
\left\langle R_{1}^{1}R_{2}^{1}\right\rangle _{B}=\left\langle R_{2}^{2}R_{3}^{2}\right\rangle _{B}=-\left\langle R_{3}^{3}R_{1}^{3}\right\rangle _{B}=1.
\end{equation}
This is a Bayesian analogue of a rank 3 cyclic system that is consistently
connected and forms a Popescu-Rohrlich box. Its contextuality, both
in CbD and STC, is maximal. In particular, when measured by contextual
fraction, it is strong ($\alpha_{\max}=0$), in accordance with how
Abramsky and colleagues view it.

\subsection{}

This Bayesian procedure can be applied to any deterministic system
with more than one possible deterministic realization. The procedure
will render the system quasi-probabilistic and, at least in all the
simple cases I can think of, consistently connected and contextual.
More work is needed to elaborate this approach. 

\section{Conclusion}

\subsection{}

We have seen that STC can be extended to apply to inconsistently connected
systems, using CbD-based multimaximal couplings to consistify these
systems. We have also seen that the Bayesian rendering of the deterministic
systems with multiple possible realizations allows STC to circumvent
the difficulty associated with inconsistent connectedness of each
of these realizations. It simultaneously extends CbD to such systems
and allows CbD to treat them in the spirit of STC, forming thereby
another bridge between the two theories. 

\subsection{}

Together, the consistification and the Bayesian treatment make STC
and CbD essentially coextensive, with a major proviso: one has to
agree to represent all measurement outcomes in a system as sets of
jointly distributed dichotomous random variables. Dichotomization
of a system is always possible, so it is more of a language choice
than a restriction of applicability. Dealing only with dichotomous
variables allows one to avoid a variety of difficulties \cite{DzhCerKuj2017},
but no proof exists that they could not be avoided by other means. 

\subsection{}

Finally, nothing in this paper implies that CbD can be replaced with
STC, or vice versa. Each of the two theories has its own aims and
means. Thus, logical aspects of contextuality, especially in the possibilistic
proofs of contextuality, are significantly more salient in STC than
CbD, adding to the former's aesthetic elegance. Perhaps the use of
the Bayesian/epistemic random variables, as discussed above, might
offer CbD a way to ``catch up'' in this respect. STC in turn might
benefit from using the language of random variables for proof purposes.
For instance, the fact that the existence of a hidden variable model
for a system of random variables is equivalent to the existence of
their joint distribution (from which it follows, in particular, that
nonlocality is a special case of contextuality) is true almost by
definition if the language of random variables is used explicitly.
It is a non-trivial, perhaps even surprising fact, however, if one
considers the systems of random variables in terms of their distributions
only \cite{Fine1982,AbramskyBrand2011}. It would be good if the equivalences
established in this paper helped the two theories to more freely borrow
from each other's native languages, follow each other's directions
of research, and use each other's proof techniques.

\end{document}